\documentclass[sigconf, nonacm]{acmart}
\usepackage{algorithmic}
\usepackage{algorithm}
\usepackage{amsthm}
\usepackage{amsmath}
\theoremstyle{remark}

\setcopyright{none}

\title{A Fast Randomized Algorithm for Finding the Maximal Common Subsequences}
\keywords{Longest Common Subsequence, Maximal common subsequence, randomized algorithm, string pattern discovery}

\author{Jin Cao}
\affiliation{\institution{Nokia Bell Labs}}
\email{jin.cao@nokia-bell-labs.com}

\author{Dewei Zhong}
\affiliation{\institution{Rutgers University}}
\email{dewei.zhong@rutgers.edu}


\begin{document}

\begin{abstract}

Finding the common subsequences of $L$ multiple
strings has many applications in the area of bioinformatics, computational linguistics, and information retrieval. A well known result states 
that finding a Longest Common Subsequence (LCS) for $L$ strings is NP-hard, e.g.,  the computational complexity is exponential in $L$.   In this paper, we develop a randomized algorithm, referred to as {\em Random-MCS}, for finding a random instance of Maximal Common Subsequence ($MCS$) of multiple strings. A common subsequence is {\em maximal} if inserting any character into the subsequence no longer yields a common subsequence. A special case of MCS is LCS where the length is the longest. 
We show the complexity of our algorithm is linear in $L$, and therefore is suitable for large $L$.
Furthermore, we study the occurrence probability for a single instance of MCS, and demonstrate via both theoretical and experimental studies that the longest subsequence from multiple runs of {\em Random-MCS} often yields a solution to $LCS$. 
\end{abstract}

\maketitle

\section{Introduction}
Data discovery and pre-processing in many data science projects often require laborious efforts and creativity from the data scientist. Developing methods that can automatically generate insights from raw data is an important topic in automated machine learning \cite{feurer2015efficient} in order to eliminate human bottleneck and make machine learning available to non-experts. As string or text is a common form of data representation, comparing strings so that information regarding to what is common and what is unique among the strings can be extracted and summarized is an important pre-processing task. 

A subsequence of a string $S$ is a character sequence that can be derived from $S$ by deleting some characters without changing the order of the remaining characters. 
Consider the case of $L$ strings where $L$ is large.
A common subsequence of $L$ strings can be thought of as a common pattern  shared by all strings.
Unlike substrings, subsequences are not required to occupy consecutive positions within the original strings. 

For string comparison, we consider two types of common subsequences of the $L$ strings. 
The Longest Common Subsequence (LCS) is a subsequence 
common to all the $L$ strings that has a maximal length. The Maximal Common Subsequence (MCS) is defined as {\em maximal} if and only if inserting any character into the subsequence can no longer yield a common subsequence.  By definition, a LCS is a MCS with the maximal length. Furthermore, there may exist many 
MCSs of different lengths, and many LCSs of the same maximal length.  For example, for the given two strings  $'fabecd'$ and $'acdef'$, 
the set of MCSs are $\{f, acd, ae\}$ where $acd$ is the LCS.


Finding LCS for multiple strings has important applications in many areas, including 
bioinformatics, computational linguistics, and information retrieval \cite{attwood1994fingerprinting, bourque2002genome,sorokin2016using}. 
The problem is, however, NP-hard  \cite{maier1978complexity} as
the number of strings $L$ becomes large. 
Much of the literature addresses the simple case of two or three strings \cite{hirschberg1975linear,masek1980faster,hakata1992algorithms}. Several methods have been proposed to improve the computation efficiency for the general case of $L$ strings, either by using parallelization \cite{chen2006fast,wang2010fast,korkin2008efficient} or assuming a special string structure  \cite{hakata1998algorithms}. Reviews of various methods can be found in \cite{bergroth2000survey, kawade2017}.


In this paper, we attack the problem of string comparison from the angle of MCS instead of LCS. The problem of finding MCS is much less studied compared to LCS.
All methods from the existing literature only consider the case of two strings.
For example, methods are presented by \cite{hirschberg1975linear} to find MCS and constrained MCS. 
A dynamic programming approach is presented in 
\cite{Fraser:1995:AAS:642129.642130} to find the shortest MCS.
More recently, \cite{sakai2019maximal} proposes an computationally efficient way to find a MCS but his method can only find one MCS. 

We develop a fast randomized algorithm to find MCS solutions of $L$ strings and show the computational complexity is linear in $L$, thus much more amenable
for the analysis of a large number of strings 
than algorithms developed for LCS. Furthermore, as each run of our algorithm returns a random MCS and LCS is the longest MCS, we can run our algorithms multiple times and then take the longest MCS from the returned solutions to approximate LCS. We study this both theoretically and empirically. 
Our main contributions are summarized as follows:
\begin{itemize}
\item  We develop a randomized algorithm, referred to as $RandomMCS$, for finding a random $MCS$ solution of multiple strings.
\item We extend an existing algorithm for finding $MCS$ of two strings \cite{sakai2019maximal}  to the case of $L$ strings. 
\item
For a set of $L$ strings with common length $n$,
we show the computational complexity of our  $RandomMCS$ algorithm is
$O(n^3 L)$ and our extension to the algorithm in \cite{sakai2019maximal}, is $O(nL \log n )$, both are linear in the number of strings $L$.
\item We carry out simulation studies to understand the performance of our proposed approach. 
\item We analyze the occurrence probability of a MCS solution returned from $RandomMCS$.
\item
We demonstrate via both theoretical analysis and experimental studies that the longest subsequence from multiple runs of our algorithm often yields a $LCS$. 
\end{itemize}

The rest of the paper is organized as follows.
In Section 2, we present the relevant background for our work. In Section 3, we propose our method and illustrate it using a toy example. In Section 4, we analyze the occurrence probability for a specific MCS and show the computational complexity of our algorithm is linear in the number of strings $L$.
We carry out simulations to understand the performance of our algorithm empirically and present an application of our work to Automated Machine Learning (AutoML) in Section 6. 
We conclude and discuss future work in Section 7. 


\section{Background}\label{sec:background}
In this section, we shall first formally define 
Longest Common Subsequence (LCS) and 
Maximal Common Subsequence (MCS) for $L$ strings. Then we discuss previous work on finding LCS and MCS. 
We shall introduce the following notations used throughout the paper. We denote the empty string by $''$ and denote the empty set by $\emptyset$. To make presentation clear, we put quote $''$ around single characters to differentiate them from variables but sometimes omit the $''$ for strings with multiple characters.
We use calligraphic letters to indicate sets, i.e., $\mathcal{A}, \mathcal{M}$, etc. Throughout the paper, strings are represented using upper case letters. We use $\oplus$ to represent string join, and  reserve the letter $L$ to indicate the number of strings in consideration.

\subsection{Definitions}
In the following, we are given a set $\mathcal{A}$ of $L$ strings:  $\{A_1, A_2, \ldots, A_L\}$, where each $A_l$ is a string with $n_l$ characters represented by
$A_l=a_{1l} a_{2l} \ldots a_{n_l,l}$. 

\begin{definition}
A sequence of characters $C$ is a common subsequence for (strings in) $\mathcal{A}$, if $C$ is contained in each $A_l, l=1,2,\ldots,L$ in the same character order. 
\end{definition}
To avoid confusion, we differentiate a subsequence from a {\em substring} where a substring a consecutive block of characters from a string. For a subsequence, we often
concatenate its characters and use a string to represent it.

\begin{definition}
Define
$LCS(\mathcal{A})$ as the longest common subsequence contained in each string $A_l$ in $\mathcal{A}, l=1,\ldots,L$.
\end{definition}
\begin{definition}
Define
$MCS(\mathcal{A})$ as a subsequence contained in each string $A_l$ in $\mathcal{A}$ with the property such that an addition of any character to $MCS(\mathcal{A})$ no longer yields a common subsequence for $\mathcal{A}$.
\end{definition}
\textbf{Example.}
The solution set of MCS for $\mathcal{A}=\{TEGAP, GAEPR\}$ is $\{GAP, EP\}$.
Out of these two solutions, $'GAP'$ is the LCS.

\subsection{Algorithms for Finding LCS and MCS}
Dynamic programming is a common technique used for finding LCS. For example, consider the LCS of two strings of length $n$, $X = x_1x_2...x_n$ and  $Y=y_1y_2...y_n$. If $x_n = y_n$, then $LCS(X,Y) = LCS(X_{n-1},Y_{n-1})$ $\oplus x_n$. If $x_n \neq y_n$, then $LCS(X,Y) = \text{max}(LCS(X_{n-1},$ $Y),LCS(X,$ $Y_{n-1}))$ where $X_{n-1}$ and $Y_{n-1}$ represent the previous $n-1$ elements of $X$ and $Y$ respectively. 
It can be shown the complexity of using dynamic programming for finding LCS is $O(n^2)$. For the general case of $L$ strings, the extension of the dynamic programming algorithm will have a time complexity of $O(n^L)$, 
which implies the problem is NP-hard \cite{maier1978complexity}.
An algorithm of a running time of $O((r+n)\log n)$ is proposed by \cite{hunt1977fast} 
where $r$ is the total number of ordered pairs of positions at which the two sequences match. In the worst case $r$ can be  $O(n^2)$.

There are several proposed methods for finding MCS.
It has been shown by \cite{Fraser:1995:AAS:642129.642130} the problem of finding all shortest MCSs for $L$ strings is NP-hard for large $L$. 
All proposed algorithms focus only on two strings and no computationally effective methods have been proposed in the general case of $L$ strings.  Our algorithm targets the general case.

\section{Algorithms to Find Multiple MCSs of $L$ Strings}

\subsection{Intuition}
Our algorithm is inspired by Lemma 2 from \cite{sakai2019maximal} which states a necessary and sufficient condition for a subsequence $W$ being maximal for two strings. We shall extend the lemma to the case of $L$ strings. 
In the following,  we denote the set of $L$ strings of interest by $\mathcal{A}=\{A_1,\ldots,A_L\}$.
\begin{definition}
For a string $A$, define 
$|A|$ as the number of characters in $A$. For each $k=1,\ldots,|A|$, define
$A(0,k]$ as the prefix of $A$ starting from position $1$ to $k$. Define $A(k,|A|]$ as the suffix of $A$ starting from position $(k+1)$ to $|A|$. Define $A(0,k]=''$ for $k=0$ and $A(k,|A|]=''$ for $k=|A|$ where $''$ is the empty string.
\end{definition}

\begin{definition}
Let $W$ be a subsequence contained in string $A$, then for any $k=0,\ldots,|W|$, define 
$Middle(A, W, k)$ as the remaining  substring obtained from $A$ by deleting both the shortest prefix containing $W(0,k]$ and the shortest suffix containing $W(k,|W|]$.
\label{def:middle}
\end{definition}
\textbf{Example.}
The following gives a simple example of this function. $Middle('T$ $EGAP', 'E', k=0)$ is $'T'$ since when 
 $W='E'$ and $k=0$, 
the shortest prefix in $'TEGAP'$ containing $W(0,k]=''$
is $''$, 
and the shortest suffix containing $W(k,|W|)='E'$ is $'EGAP'$ (this example is also shown in the first line in Cell 3 of Figure 2).


\begin{theorem}
For any common subsequence $W$ of
$\mathcal{A}$, $W$ is maximal if and only if
for any $0\leq k \leq |W|$, the set of $L$  substrings 
$Middle(A_l, W, k)$, derived from $A_l$, $l=1,\ldots, L$, are disjoint (i.e. do not share any common characters).
\label{lem:mcs1}
\end{theorem}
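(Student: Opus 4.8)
The plan is to reduce the global maximality condition to a pointwise insertion test at each split point $k$, and then characterize that test through the $Middle$ substrings. First I would fix an index $0\le k\le |W|$ and an arbitrary character $c$, and form the candidate extension $W_c=W(0,k]\oplus c\oplus W(k,|W|]$. By definition $W$ fails to be maximal precisely when some such $W_c$ is a common subsequence of $\mathcal{A}$, so it suffices to show that $W_c$ is a common subsequence of $\mathcal{A}$ if and only if $c$ is a character contained in every $Middle(A_l,W,k)$, $l=1,\dots,L$. Since membership in a common subsequence is tested string by string, the core of the argument is the single-string statement: for each $A_l$, the string $W_c$ is a subsequence of $A_l$ if and only if $c$ occurs in $Middle(A_l,W,k)$.

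To prove this single-string claim I would drop the subscript and write $A$, letting $A(0,p]$ be the shortest prefix of $A$ containing $W(0,k]$ and $A(q,|A|]$ the shortest suffix containing $W(k,|W|]$, so that $Middle(A,W,k)=A(p,q]$. The key tool is the standard greedy-matching optimality: embedding $W(0,k]$ into $A$ from the left as early as possible makes the matched position of its last character $w_k$ as small as possible, namely $p$; symmetrically, embedding $W(k,|W|]$ from the right as late as possible makes the matched position of its first character $w_{k+1}$ as large as possible, namely $q+1$. I would establish this by the usual exchange argument, pushing any embedding towards the greedy one without increasing, respectively decreasing, the relevant endpoint, while handling the boundary conventions $p=0$ when $k=0$ and $q=|A|$ when $k=|W|$, where one matched block is empty.

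Given the greedy optimality, both directions follow cleanly. For the ``if'' direction, if $c$ sits at some position $j$ with $p<j\le q$, then $A(0,p]$ already realizes $W(0,k]$ at positions $\le p<j$, and $A(q,|A|]$ is a suffix of $A(j,|A|]$ that still realizes $W(k,|W|]$ at positions $>j$, so placing $c$ at $j$ exhibits $W_c$ as a subsequence of $A$. For the ``only if'' direction, take any embedding of $W_c$ and let $j$ be the position used for $c$; the prefix forces $j$ to exceed the match position of $w_k$, which is at least $p$, while the suffix forces $j$ to lie below the match position of $w_{k+1}$, which is at most $q+1$. Hence $p<j\le q$, so $c$ occurs in $A(p,q]=Middle(A,W,k)$.

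Finally I would reassemble the pieces. Aggregating over $l$, the extension $W_c$ is a common subsequence of $\mathcal{A}$ exactly when $c$ lies in $Middle(A_l,W,k)$ for every $l$, i.e.\ when all $L$ middle substrings contain the character $c$. Ranging over all $c$ and all split points $k$, $W$ admits some admissible insertion if and only if for some $k$ the substrings $Middle(A_l,W,k)$ share a common character; negating this, $W$ is maximal if and only if for every $k$ no single character belongs to all $L$ middle substrings, which is exactly the stated disjointness condition. I expect the main obstacle to be the rigorous justification of the greedy-matching optimality together with the bookkeeping of strict versus non-strict inequalities at the boundaries $k=0$ and $k=|W|$.
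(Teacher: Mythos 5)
Your proposal is correct, and it is strictly more rigorous than the paper's own proof. The paper argues only one direction explicitly: maximality implies disjointness, shown by contradiction --- if all $Middle(A_l,W,k)$ share a character $c$, then $W(0,k]\oplus c\oplus W(k,|W|]$ is a longer common subsequence, which is exactly the easy ``if'' half of your single-string claim. For the converse (disjointness implies maximality), the paper says only that ``the converse is true since it validates the condition of $W$ being maximal,'' which silently assumes the nontrivial fact that \emph{any} successful single-character insertion must place its character inside every one of the $Middle$ substrings. That fact is precisely your ``only if'' direction, and it genuinely requires the greedy-matching optimality you identify: in an arbitrary embedding of the extended sequence into $A_l$, the inserted character's position must lie strictly after the end of the shortest prefix containing $W(0,k]$ (since that endpoint is minimal over all embeddings) and at or before the last position preceding the shortest suffix containing $W(k,|W|]$ (since that start is maximal). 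Without this exchange/greedy argument, a character could in principle be insertable while sitting outside the $Middle$ window under some non-greedy embedding, and the theorem's ``only if'' direction would be unjustified. So your decomposition into a per-string equivalence lemma, proved via greedy optimality with care at the boundaries $k=0$ and $k=|W|$, supplies exactly the missing step; the paper's sketch and your proof share the same high-level strategy, but yours is the complete version.
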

\begin{proof}
If $W$ is maximal, then for each $k=0,1,\ldots,|W|$, the $L$ substrings  $Middle(A_l,W,k)$, 
derived from $A_l, l=1,\ldots, L$,  have to be disjoint. This is because if this is not true, then there exisits a common character $c$ shared by the $L$ substrings $Middle(A_l,W,k), l=1,\ldots, L$. Therefore, by (string) joining $W(0,k]$, $c$, and $W(k,|W|)$, we can construct a longer common subsequence that contains $W$ which contradicts the condition that $W$ is maximal. The converse is true since it validates the condition of $W$ being maximal. 
\end{proof}

The contra-positive of the above lemma can be stated as follows. 

\begin{theorem}
For any common subsequence $W$ of
$\mathcal{A}$,  $W$ is not maximal if and only if there exist $k, 0\leq k\leq |W|$ such that the set of $L$ substrings, $Middle(A_l, W, k)$,
derived from $A_l$, 
$l=1,\ldots, L$
share at least one common character.
\label{lem:mcs2}
\end{theorem}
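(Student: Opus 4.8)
The plan is to derive Theorem \ref{lem:mcs2} directly as the logical contrapositive of Theorem \ref{lem:mcs1}, so that no genuinely new argument is needed beyond a careful negation of the biconditional already established. Write $P$ for the proposition ``$W$ is maximal'' and $Q$ for the proposition ``for every $k$ with $0 \le k \le |W|$, the substrings $Middle(A_l, W, k)$, $l = 1, \ldots, L$, share no common character.'' Theorem \ref{lem:mcs1} asserts $P \iff Q$. Since a biconditional is logically equivalent to the biconditional of the negations of its two sides, we have $\neg P \iff \neg Q$, and it remains only to rewrite each side in the language of the present statement.

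The left side $\neg P$ is simply ``$W$ is not maximal,'' which is exactly the hypothesis under consideration. For the right side I would apply De Morgan's law to the universally quantified statement $Q$: its negation $\neg Q$ reads ``there exists $k$ with $0 \le k \le |W|$ for which the substrings $Middle(A_l, W, k)$, $l = 1, \ldots, L$, do not share no common character.'' The one point that deserves a line of justification is the translation of the inner clause: by Definition \ref{def:middle} the objects in question are ordinary substrings, and the negation of ``share no common character'' (equivalently, the intersection of their character sets is empty) is precisely ``share at least one common character'' (the intersection is nonempty), which is the phrasing used in Theorem \ref{lem:mcs2}. Chaining $\neg P \iff \neg Q$ with these two rewritings yields the claimed equivalence.

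I expect no real obstacle here, since the entire mathematical content lives in Theorem \ref{lem:mcs1} and the present statement is a restatement convenient for the algorithmic development that follows. The only place to be slightly careful is to keep the quantifiers in the correct order, so that ``share no common character for all $k$'' negates to ``share a common character for some $k$'' and not the reverse; once the scope of the existential and universal quantifiers is tracked correctly through the negation, the proof is complete.
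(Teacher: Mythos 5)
Your proposal is correct and matches the paper exactly: the paper offers no separate argument for this theorem, introducing it with the phrase that it is the contra-positive of Theorem~\ref{lem:mcs1}, which is precisely your derivation of $\neg P \iff \neg Q$ together with the quantifier negation. Your extra care with De Morgan's law and quantifier scope is the only content such a proof needs, and you have it right.
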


Theorems \ref{lem:mcs1} and \ref{lem:mcs2} are in fact the basis of our algorithm since it can be used  to {\em constructively} obtain a MCS.
Suppose we start $W$ as the empty set, according to Theorem \ref{lem:mcs2}, if $W$ is not maximal, then we can find a 
character that is common to the set of $L$ strings $\mathcal{A}$ to add to $W$. This step can be performed iteratively until $W$ become maximal, i.e., the set of $L$ substrings, $Middle(A_l,W,k)$,  each from $A_l$,  becomes disjoint so that we can no longer insert characters to $W$.  
To obtain many instances of MCSs, we randomize the character insertion to $W$, which is the essence of our algorithm.


\subsection{RandomMCS Algorithm}
To formally present out algorithm, we first need to define some supporting functions. 
\begin{definition}
\label{def:commonchar}
Define {\em commonChar}($\mathcal{A}$) as the function that returns a set of common characters shared by each string in a given string set $\mathcal{A}$.
\end{definition}

\textbf{Example.} Suppose $\mathcal{A}=\{TEGAP, GAEPR\}$, the function will return a set of 4 characters $\{E, G, A, P\}$ as they are all shared characters for the two strings. Suppose $\mathcal{A}=\{abccde,gfchca,dfcca\}$, then the function will return the set $\{a,c\}$.
However, in this case, the character $'c'$ appears at least two times in every string. This frequency information can be used in our algorithm when we randomly select a character from the common set so that
the high-frequency characters are more likely to be selected.

\begin{definition}
Given a set of $L$ Strings $\mathcal{A}=\{A_1,A_2,\ldots,A_L\}$ and 
a common subsequence $W$, define the function $BreakPoints(\mathcal{A}, k)$ that returns the set of location indices $k$ to be inserted in $W$
 so that the new subsequence is still common to all strings in $\mathcal{A}$. That is, the updated common subsequence is the string join of $W(0,k], c, W(k,|W|]$. 
\end{definition}

A pseudo code implemention of the function $BreakPoints(\mathcal{A}, k)$ is shown as follows.
\begin{algorithm}
\caption{Function $BreakPoints$}
\label{alg:segPos}
\begin{flushleft}
\textbf{Input}: A set of $L$ strings $\mathcal{A}$ and a common subsequence $W$ \leavevmode \\
\textbf{Output}: {The list of indices in $W$ where new characters can be potentially inserted to create an updated common subsequence.}
\end{flushleft}
\begin{algorithmic}[1]
\STATE \hspace{0in}{\em position}  $\leftarrow \emptyset$
\STATE \textbf{for} $k$ in $0:|W|$
    \STATE \hspace{0.15in}\textbf{for} $l$ in $1:L$, $m_l \doteq Middle(A_l, W, k)$
\STATE \hspace{0.15in}\textbf{if} $commonChar(\{m_1,\ldots,m_L\})\neq \emptyset$ 
\STATE \hspace{0.3in} $position \leftarrow position \cup \{k\}$
\STATE \hspace{0in} \textbf{return} \em{position}
\end{algorithmic}
\end{algorithm}

\textbf{Example.} The following gives examples of this function.
For the given $\mathcal{A}=\{TEGAP,GAEPR\}$ and a subsequence $W='A'$, $BreakPoints($ $\mathcal{A}, W)$ will return the set $\{0,1\}$. This is because when $k=0$, according Definition~\ref{def:middle},
$Middle('TEGAP', 'A', 0)='TEG'$ and $Middle('GAEPR',$ $'A', 0)='G'$. Hence, since there is a common character $'G'$ shared by $'TEG'$ and $'G'$, the evaluation of existence of common characters in line 4 of Algorithm~\ref{alg:segPos} will succeed. Likewise, when $k=1$, $Middle('TEGAP', 'A', 1)='P'$ and $Middle('GAEPR', 'A', 1)='EPR'$, sharing a common character $'P'$. Therefore $BreakPoints($ $\mathcal{A}, W)$ will return the set $\{0,1\}$. 

On the contrary, for $\mathcal{A}=\{TEGAP,GAEPR\}$ and
$W='GAP'$
$BreakPoints(\mathcal{A}, W)$ will return an empty set. This is because for each $k=0,1,2$, $Middle('TEGAP', 'GAP', k)$ and $Middle('GAEPR',$ $'GAP', k)$ do not share any common characters.


Algorithm~\ref{alg:mcs} presents the pseudo-code of our algorithm for finding a random solution of MCS. The function is written in a recursive fashion and has an optional starting value of $W$
which we shall explain further in Section \ref{subsec:constraint-mcs}. The termination condition of the algorithm is expressed in line
2 which validates $W$ as a MCS by
Theorem~\ref{lem:mcs1}.
Line 3-7 applies Theorem  \ref{lem:mcs2} (which states the contrapositive of Theorem \ref{lem:mcs1}) to constructively search for the possible common characters to update a previous common subsequence $W$. In line 5, when we randomly select a character from the common set, we can utilize the minimum frequency discussed in the example following Definition~\ref{def:commonchar}
as the optional weights. We have found via simulation studies in Section~\ref{sec:simulation} that this performs better for finding the long MCSs.

\begin{algorithm}
\caption{A randomized algorithm, $RandomMCS$, to find a single MCS for $L$ strings $\mathcal{A} = \{A_1,\ldots,A_L\}$}.
\label{alg:mcs}
\begin{flushleft}
\textbf{Input}:  A set of strings $\mathcal{A}$ \leavevmode \\
\textbf{Optional Input}: An initial starting value of $W$ with default $W=\emptyset$\leavevmode\\ 
\textbf{Output}: {A random MCS $M$ of $\mathcal{A}$}
\end{flushleft}
\begin{algorithmic}[1]
\STATE $position \doteq BreakPoints(\mathcal{A}, W)$\leavevmode\\
\STATE \textbf{if} $position=\emptyset$ \textbf{return} $W$
\STATE\textbf{else} $k \doteq$ a random element (index value) from the set $position$
\STATE \hspace*{0.2in} $\mathcal{A}' \doteq \{Middle(A_l,W,k), l=1,\ldots, L\}$
\STATE \hspace*{0.2in} $c\doteq$ a random character from the set $commonChar(\mathcal{A}')$ with or without optional frequency weighting
\STATE \hspace*{0.2in}{$W\leftarrow$  $W(0,k] \oplus c \oplus W(k,|W|)$} \leavevmode \\
\STATE \hspace*{0.2in}\textbf{return} $RandomLCS(\mathcal{A},W)$
\end{algorithmic}
\end{algorithm}

\begin{figure*}[htpb]
\centering
\includegraphics[width=7in]{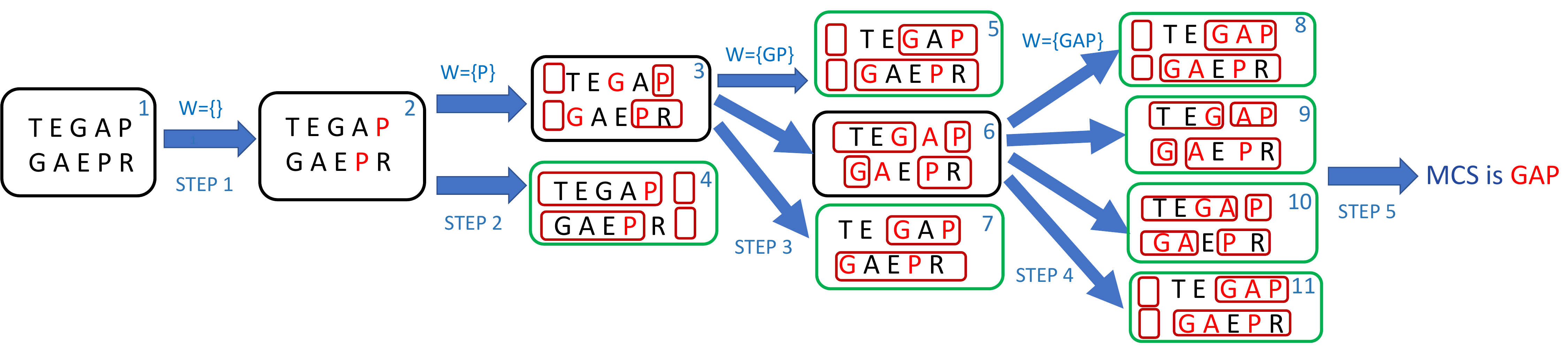}
\caption{Illustration of $RandomMCS$ algorithm for the case of two strings in a run producing $GAP$ as the MCS}
\label{fig:example1}
\end{figure*}

\begin{center}
\begin{figure*}[htpb]
\includegraphics[width=7in]{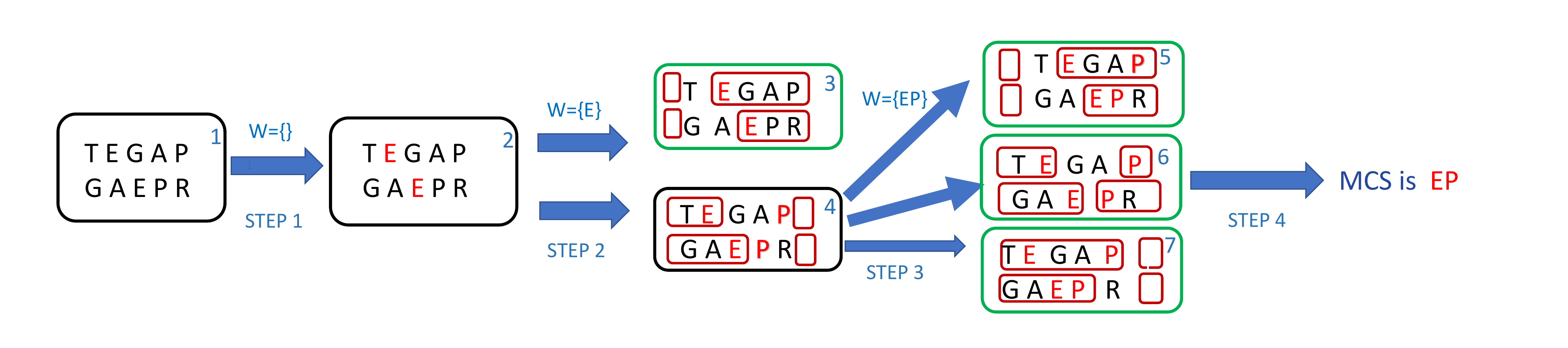}
\caption{Illustration of $RandomMCS$ algorithm for the case of two strings in a different run producing $EP$ as the MCS}
\label{fig:example2}
\end{figure*}
\end{center}

\subsection{A Toy Example}
We shall illustrate our $RandomLCS$ algorithm for finding a random MCS solution using a toy example consisting of two simple strings:$\{TEGAP, GAEPR\}$. We show two runs of the algorithm with different MCS solution output in Figure 1 and 2 respectively. The solutions are different due to the inherent
randomness in the algorithm design. 

Each figure consists of cells that show a certain state of the algorithm through iterations, linked by arrows illustrating the state progression. 
To make the presentation clear, we label each cell with an index value shown in the upper right corner of the cell. Characters in red within each cell represent the current value of the common subsequence $W$ which will be updated through the progression to produce a final MCS solution. The small red frames around the characters indicate the prefix and suffix to be eliminated when 
computing $Middle(A,W,k)$ for a certain $k$ value
(see Definition~\ref{def:middle}), i.e. $Middle(A,W,k)$ is the remaining characters excluding the characters in the red frames.
The outgoing branches from a cell represent the
candidate indices $k=0,\ldots,|W|$ of current common subsequence $W$, in an attempt to update $W$ by
inserting new characters (line 2 of Algorithm~\ref{alg:segPos}). A branch will expire if condition in line 4 of Algorithm~\ref{alg:segPos} is not satisfied, that is, no common characters are found to perform the update.



In Figure 1, we want to find the MCS for the list \{$TEGAP,GAEPR$\} shown in Cell 1. Notice that the two strings share 4 common characters:$'E','G','A','P'$.  Initialize $W=''$. 
Next in Step 1, we choose one of the four characters $'P'$ as the first character to be inserted in $W$, 
and update $W='P'$. We now move to Cell 2 where \{$P$\} is marked red. Since the length of $|W|=|'P'|=1$, we have two places to insert characters in $W$, $k=0,1$, corresponding to the two branches from Cell 2, resulting Cell 3 and Cell 4, respectively. 

We will discuss Cell 4 first, which corresponds to the case of $W='P'$ and $k=1$. In this case, since $Middle('TEGAP', W, k)=''$ and $Middle('GAEPR', W, k)=$ $'R'$ do not share any common characters, the cell expires (recall the red frames indicate the prefix and suffix to be removed for calculating $Middle(A,W,k)$). On the other hand, in Cell 3 where $W='P'$ and $k=0$, $Middle('TEGAP', 'P', k=0)=$ $'TEGA'$ and $Middle('GAEPR', 'P',$ $k=0)='GAE'$, sharing both $'E'$ and $'G'$ as common characters. The progression continues and we select the character $'G'$ to be inserted in $W$ at position 0, resulting an updated $W='GP'$. In summary, at the end of Step 2, $BreakPoint(\mathcal{A}, 'P')={0}$ and character $'G'$ is randomly selected to obtain an updated common sequence $W='GP'$.

\par

By the same token, from Cell 3, 
since $W='GP'$, there are three outgoing branches for $k=0,1,2$ respectively. Similar analysis shows $BreakPoint(\mathcal{A}, W)=\{1\}$ which implies Cell 5 and 7 will expire, and only Cell 6 will continue to the next step. In Cell 6, character $'A'$ is selected so the updated common subsequence is now $W='GAP'$. In Step 4,  $BreakPoint(\mathcal{A}, W)$ returns an empty set which marks the end of the algorithm, resulting $'GAP'$ as the returned MCS output.


Figure~\ref{fig:example2} shows a different realization of our algorithm for the same string pair.
The first difference from Figure 1 occurs in
Cell 2 where the character $'E'$ s added to the common subsequence $W$ instead of $'P'$. Next in Step 2, characer $'P'$ is selected to result a final MCS output of $'EP'$.

\subsection{Constrained MCS}
\label{subsec:constraint-mcs}
A constrained MCS is a MCS that must include a predefined subsequence $W_0$. It is in fact straightforward to modify our algorithm to obtain constrained MCS, simply by using $W_0$ as the starting value (the optional input in the pseudo-code shown in Algorithm~\ref{alg:mcs}). This is due to the nature of our algorithm design as it incrementally inserts a new character to update an existing common subsequence until it becomes maximal.
For instance, consider the constrained MCS problem for the input string set $\mathcal{A}=\{TEGAP,GAEPR\}$ 
that has to contain $'GP'$. Using $'GP'$ as the optional input in Algorithm~\ref{alg:mcs}, the derivation process is identical to Figure \ref{fig:example1} when Cell 3 is used as the starting point. Branches from Cell 3 will finally lead to  $'GAP'$ as the MCS output.

We comment here that \cite{sakai2019maximal}
presented an algorithm for the constrained MCS problem  in the case of two strings. However, 
the modification from the base algorithm used to derive a single  MCS solution is significant.

\section{Analysis of $RandomMCS$ Algorithm}
In this section, we analyze the performance of $RandomMCS$ algorithm. First, for each $MCS$ solution, we study the probability of the solution being returned from one run of the algorithm. We analyze LCS as a special instance of MCS and discuss the probability of a LCS being returned from the algorithm. Next, we analyze the computational complexity of our algorithm and compare it to previous approaches. 
As previous approaches for finding MCS only applies to two strings, we also propose an extension of a previous solution to the case of multiple strings. 

\subsection{Probability Analysis}
As a set of strings may have many MCSs, we denote the set of MCSs as $\mathcal{M}$. Note that one run of our RandomMCS algorithm will yields exactly one random MCS $M$ from the set $\mathcal{M}$, a natural question
to ask is what is the probability value of $M$ being returned from a single run. 

\begin{theorem}
For a given MCS $M$ in the solution set $\mathcal{M}$, the probability of $M$ being returned as the solution from $RandomMCS$ depends only on $M$ and the solution set $\mathcal{M}$.
For a given subsequence $W_0$, let $\mathcal{M}(W_0)$ be the set of MCSs that contains $W_0$ as a subsequence. Then for any $M\in \mathcal{M}(W_0)$, the probability that $M$ being returned as the solution from constrained $RandomMCS$ depends only on $M$ and $\mathcal{M}(W_0)$. This implies that the probability is conditionally independent of the set of $L$ strings, $\mathcal{A}$.
\end{theorem}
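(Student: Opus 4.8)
The plan is to recognize one run of $RandomMCS$ as a random walk on the directed acyclic graph whose vertices are the common subsequences of $\mathcal{A}$ and whose edges are single-character insertions, and then to show that both this graph and the walk's transition probabilities are functions of the solution set $\mathcal{M}$ alone. The walk starts at the empty subsequence $''$ (the root) and, by Theorem~\ref{lem:mcs1}, terminates exactly at the sinks of the graph, which are precisely the members of $\mathcal{M}$. The probability that a fixed $M$ is returned is then the sum over all root-to-$M$ paths of the product of the edge probabilities, so it suffices to prove that every edge probability is determined by $\mathcal{M}$.

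The linchpin is a greedy-extension lemma: a string $W$ is a common subsequence of $\mathcal{A}$ if and only if $W$ is a subsequence of some $M\in\mathcal{M}$. The ``if'' direction is immediate, and the ``only if'' direction follows by repeatedly applying Theorem~\ref{lem:mcs2} to insert characters into $W$ until it becomes maximal. This identifies the vertex set of the graph with a purely $\mathcal{M}$-determined object. I would then show that, for any such $W$, the set $BreakPoints(\mathcal{A},W)$ and, for each valid $k$, the set $commonChar(\{Middle(A_l,W,k)\})$ are likewise intrinsic to $\mathcal{M}$: a position $k$ is a breakpoint if and only if some one-character insertion at $k$ produces a common subsequence, and a character $c$ lies in the common set at $k$ if and only if $W(0,k]\oplus c\oplus W(k,|W|]$ is a common subsequence, i.e. a subsequence of some member of $\mathcal{M}$. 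Consequently the cardinalities $|BreakPoints(\mathcal{A},W)|$ and $|commonChar(\{Middle(A_l,W,k)\})|$ --- the only quantities entering the uniform random choices on lines 3 and 5 of Algorithm~\ref{alg:mcs} --- depend only on $\mathcal{M}$.

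Assembling these facts, the one-step probability of moving from $W$ to a one-character extension $W'$ is
$$P(W\to W')=\sum_{(k,c)}\frac{1}{|BreakPoints(\mathcal{A},W)|}\cdot\frac{1}{|commonChar(\{Middle(A_l,W,k)\})|},$$
the sum ranging over all pairs $(k,c)$ with $W(0,k]\oplus c\oplus W(k,|W|]=W'$; since every such pair is automatically valid (it yields the common subsequence $W'$) and the indexing set of $(k,c)$ pairs is a function of the strings $W,W'$ alone, each term and hence the whole sum is determined by $\mathcal{M}$. Because the return probability of $M$ is built solely from these transition probabilities along the graph, it depends only on $M$ and $\mathcal{M}$, which gives conditional independence from $\mathcal{A}$. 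For the constrained version I would run the identical argument with $W_0$ as the root: every state reachable from $W_0$ is a supersequence of $W_0$, so its valid extensions are exactly the subsequences of MCSs in $\mathcal{M}(W_0)$, and the sinks of the rooted subgraph are precisely $\mathcal{M}(W_0)$; the transition probabilities therefore depend only on $\mathcal{M}(W_0)$, yielding the second claim.

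The main obstacle I anticipate is the bookkeeping in the display above: several distinct $(k,c)$ pairs can produce the same $W'$ (for instance inserting a repeated character at adjacent positions), so I must argue carefully that this multiplicity, together with the breakpoint and character counts, is genuinely a function of $\mathcal{M}$ rather than of the underlying strings --- the greedy-extension lemma is what lets me replace every ``is a common subsequence'' test by an ``is a subsequence of some $M\in\mathcal{M}$'' test and thereby purge $\mathcal{A}$ from all the counts. I would also flag that the argument requires the uniform character selection on line 5; with the optional frequency weighting the choice probabilities read off actual character multiplicities in $\mathcal{A}$, and independence from $\mathcal{A}$ no longer holds.
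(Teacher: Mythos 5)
Your proposal is correct, and its skeleton --- show that each per-step random choice in Algorithm~\ref{alg:mcs} is governed by quantities intrinsic to $\mathcal{M}$, then compose along paths --- is the same as the paper's. The difference is that the paper \emph{asserts} the linchpin in one sentence (``both random selections depend only on the current $W$ and the set of MCSs'') and immediately concludes, whereas you actually prove it: your greedy-extension lemma ($W$ is a common subsequence of $\mathcal{A}$ iff it is a subsequence of some $M\in\mathcal{M}$, obtained by iterating Theorem~\ref{lem:mcs2}), combined with the embedding/exchange argument showing that $c\in commonChar(\{Middle(A_l,W,k)\})$ iff $W(0,k]\oplus c\oplus W(k,|W|]$ is a common subsequence of $\mathcal{A}$, is exactly what converts every test the algorithm performs into an $\mathcal{M}$-intrinsic test; and your path-sum formula, with the careful $(k,c)$-multiplicity bookkeeping, is what turns ``hence the result'' into a rigorous deduction. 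So your write-up is best read as the proof the paper's two-line argument is gesturing at: the paper buys brevity, you buy an actual argument, and the same rooted-at-$W_0$ reasoning cleanly yields the constrained statement.

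One correction: your closing caveat --- that with the optional frequency weighting ``independence from $\mathcal{A}$ no longer holds'' --- is mistaken, and your own lemma shows why. The weight attached to $c$ at breakpoint $k$ is the least number of occurrences of $c$ among the substrings $Middle(A_l,W,k)$, $l=1,\ldots,L$, and the same embedding argument you use for a single insertion shows that this minimum equals the largest $m$ for which $W(0,k]\oplus c^m\oplus W(k,|W|]$ is still a common subsequence of $\mathcal{A}$, i.e., a subsequence of some $M\in\mathcal{M}$ (of some $M\in\mathcal{M}(W_0)$ in the constrained case). Hence the frequency weights, and therefore the weighted selection probabilities, are themselves determined by $W$, $k$, and $\mathcal{M}$, so the theorem holds for the frequency-weighted variant of the algorithm as well; your restriction to uniform selection is unnecessary.
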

\begin{proof}
Notice that each character insertion to an existing common subsequence $W$ (line 3-6 of Algorithm 2) is carried out by two random selections. The first is the choice of a breakpoint position $k$ (line 3) and the second is the choice of a common character $c$ (line 5). Both random selections depend only on the current $W$ and the set of MCSs. Therefore, the random selection is conditionally independent of the original set of strings given $\mathcal{M}$. Hence the result.
\end{proof}

{\textbf{Example.}}
We evaluate the occurrence probability of each MCS being returned from one run of $RandomMCS$ using examples in Figure~\ref{fig:example1} and \ref{fig:example2} where the set of strings under consideration are $\{TEGAP,GAEPR\}$. The solution set of MCS is $\{GAP, EP\}$. Starting with an empty string $W$, notice that we have 4 common characters \{$E,G,A,P$\} in the beginning and all of them share the same probability 1/4 to be selected.
If the first selected character is $'G'$ or $'A'$, the final MCS produced must be $'GAP'$. Likewise, the MCS is $'EP'$ when the first character selected is $'E'$. But when the first character is $'P'$, the returned solution depends on the second selected character. In this case, the choice of first two characters are \{$GP,AP,EP$\} and all of them have the same occurrence probability of 1/3. In total, the probability of $'GAP'$ is $1/4+1/4+1/4 \cdot 2/3=2/3$ and that of $'EP'$ is $1/4+1/4 \cdot 1/3=1/3$.
In this case, we can see our algorithm favors the longer MCS (the LCS) since it has a higher probability. 

\begin{theorem}
\label{thm:mcsprob}
Let $C$ be an upper bound of the number of unique 
common characters  for string set $\mathcal{A}$, i.e.,
$|CommonChars(\mathcal{A})|\leq C$. If $M \in \mathcal{M}$ is a MCS that has a distinguishing subsequence with length bounded by $D$ and  the character 
is selected uniformly random in line 5 of Algorithm~\ref{alg:mcs}, then
it is easy to show that
\[
P(M) \geq C^{-D}. 
\]
This implies the occurrence probability of $M$ is bounded below.
\end{theorem}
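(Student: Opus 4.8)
The plan is to combine a \emph{monotonicity} property of the algorithm with a direct counting of the random character choices. I read a \emph{distinguishing subsequence} of $M$ as a subsequence $S$ of $M$ with $|S|\le D$ that no other MCS in $\mathcal{M}$ contains; equivalently, $M$ is the unique element of $\mathcal{M}$ having $S$ as a subsequence. The first fact I would record is that $RandomMCS$ never deletes a character: line 6 of Algorithm~\ref{alg:mcs} only splices a new character into $W$, so the value of $W$ at every intermediate step is a subsequence of its value at termination. Moreover, by the termination test in line 2 together with Theorem~\ref{lem:mcs1}, the returned $W$ is always an MCS of $\mathcal{A}$. Consequently, if at some step the current $W$ already contains $S$ as a subsequence, then the returned MCS contains $S$, and by uniqueness it must equal $M$.

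Given this, I would lower bound $P(M)$ by the probability that $W$ ever contains $S$. The quantitative input is that the character drawn in line 5 is uniform over $commonChar(\mathcal{A}')$, where $\mathcal{A}'$ is the set of $Middle$ substrings. Since any character common to all these substrings is in particular common to all of $\mathcal{A}$, we have $commonChar(\mathcal{A}')\subseteq CommonChars(\mathcal{A})$, so $|commonChar(\mathcal{A}')|\le C$ and each prescribed character is drawn with probability at least $1/C$. I would then exhibit a favorable run that inserts the $\le D$ characters of $S$ in order: at each of the at most $D$ character choices the conditional probability of drawing the prescribed character is at least $1/C$, and multiplying these conditional probabilities over the run gives a contribution of at least $C^{-D}$, whence $P(M)\ge C^{-D}$.

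The step I expect to be the main obstacle is the \emph{breakpoint} selection in line 3, which the clean bound $C^{-D}$ does not explicitly pay for. Each insertion requires choosing not only the right character but also a position $k$ from $BreakPoints(\mathcal{A},W)$ that keeps the partially built $S$ a subsequence of $M$; inserting the correct character at the wrong breakpoint can destroy the order (e.g.\ turning a target $AB$ into $BA$). A single prescribed computation path therefore carries an extra factor $\prod_i 1/b_i$, where $b_i$ is the number of breakpoints at step $i$, which would push the bound below $C^{-D}$ whenever some $b_i>1$. The base case $D=1$ is clean, since $W=\emptyset$ forces the unique breakpoint $k=0$; this is exactly what happens in the worked example, where the first character alone distinguishes $GAP$ from $EP$.

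To recover $C^{-D}$ in general I would argue by \emph{aggregation} rather than along a single path: instead of insisting that $W$ equal the prefix $s_1\cdots s_i$ after $i$ steps, I would track the weaker invariant that $W$ contains $s_1\cdots s_i$ as a subsequence, and sum over all breakpoint choices consistent with that invariant. A natural route is induction on $D$: show that from any state whose $W$ contains $s_1\cdots s_{i-1}$, the probability of reaching a state whose $W$ contains $s_1\cdots s_i$ is at least $1/C$, with the breakpoint branching summed out so that only the character factor survives. Verifying that the breakpoints admitting the correct next character always carry enough probability mass, so that the position choice contributes a factor of $1$ rather than $1/b_i$ after aggregation, is the technical heart of the argument and the part I would spend the most care on.
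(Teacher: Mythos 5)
Your first two paragraphs coincide with the paper's own proof: reduce $P(M)$ to the probability that $W$ ever contains the distinguishing subsequence $S$ (monotonicity of line 6 plus the termination test via Theorem~\ref{lem:mcs1}), then use $commonChar(\mathcal{A}')\subseteq commonChar(\mathcal{A})$ to bound each character draw by $1/C$. The breakpoint obstacle you then raise is genuine --- the paper's proof silently ignores it --- but your proposed repair fails: the inductive lemma ``from any state whose $W$ contains $s_1\cdots s_{i-1}$, the probability of reaching a state containing $s_1\cdots s_i$ is at least $1/C$'' is false, because a reachable state can contain $s_1\cdots s_{i-1}$ together with interleaved characters that make inserting $s_i$ at the required place forever impossible, so the conditional probability can be $0$. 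Concretely, for $\mathcal{A}=\{bxab,\,abxa\}$ the MCS set is $\{ab,\,bxa\}$, so $C=3$, and $S=ab$ is a distinguishing subsequence of $M=ab$ with $D=2$ (no single character distinguishes $ab$, since $a$ and $b$ both occur in $bxa$). The state $W=xa$ is reachable (insert $x$, then $a$ at breakpoint $1$) and contains $s_1=a$, yet every common subsequence containing $xa$ is a subsequence of $bxa$, so no continuation ever contains $ab$: the probability is $0$, not $\geq 1/3$.

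The aggregation that does work sums only over intermediate states that are themselves subsequences of $S$, i.e., over the $|S|!$ insertion orders of $S$. Two facts are needed. (i) If $W$ and $W'=W(0,k]\oplus c\oplus W(k,|W|]$ are both common subsequences of $\mathcal{A}$, then $c\in commonChar(\{Middle(A_l,W,k)\}_l)$: embedding $W'$ into $A_l$, the shortest prefix containing $W(0,k]$ ends no later than where the embedded copy of $W(0,k]$ ends, and the shortest suffix containing $W(k,|W|]$ begins no earlier than where the embedded copy of $W(k,|W|]$ begins, so the embedded $c$ lies strictly inside $Middle(A_l,W,k)$. Hence every one-character insertion between common subsequences is an available move of Algorithm~\ref{alg:mcs}, and each insertion order of $S$ is a valid computation path. (ii) At the $i$-th insertion $|W|=i-1$, so $|BreakPoints(\mathcal{A},W)|\leq i$ and the required breakpoint is chosen with probability at least $1/i$, while the required character is chosen with probability at least $1/C$. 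Distinct orders produce distinct breakpoint sequences (a permutation is determined by its inversion table), so these $|S|!$ events are disjoint, and each forces the output to equal $M$; summing gives
\[
P(M)\ \geq\ |S|!\cdot\frac{1}{|S|!}\,C^{-|S|}\ =\ C^{-|S|}\ \geq\ C^{-D}.
\]
So your intuition that the position choice should cost a factor of $1$ ``after aggregation'' is right, but the aggregation must run over insertion orders of $S$, where the $|S|!$ orders exactly cancel the worst-case breakpoint dilution $1/|S|!$ --- not over arbitrary states containing a prefix of $S$, where the needed lower bound simply does not hold.
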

\begin{proof}
Let $S$ be a distinguishing subsequence for a MCS $M$
with length bounded by $D$, which implies that 
 $M$ is the only MCS containing $S$. 
 Therefore, if $S$ is selected as the common subsequence after at most $|D|$ character insertions to the initial empty string, then $M$ would be returned as the output MCS from the $RandomMCS$ algorithm. It is now clear that the probability of returning $M$ is bounded by the probability of 
selecting $S$ as the common subsequence after at most $|D|$ character insertions into the initial empty string. If the characters are chosen uniformly, then this probability is bounded by $C^{-D}$.  
\end{proof}

For our toy example where the string set is $\{TEGAP,GAEPR\}$ and the solution set of MCS is $\{GAP, EP\}$. Notice that the number of unique common characters is $C=4$. In addition, either $'G'$ or $'A'$ is a distinguishing subseqeunce for MCS $'GAP'$, therefore, the probability of $GAP$ is bounded by $2C^{-D}=2\cdot 4^{-1}$ = 1/2. Obviously this is a loose lower bound since we have shown before that the actual probability is $2/3$.

For a specific MCS $M$, if the occurrence probability of $M$ is bounded below by a value  $p$, then with enough independent runs of $RandomMCS$ algorithm we can recover $M$ with a high probability.
In fact, for an arbitrarily small $\epsilon$, if we set 
$$T=\left\lceil\frac{\log \epsilon}{\log (1-p)}\right\rceil, $$
then 
\[
P(M \mbox{\ does not appear in $T$ runs}) \leq \epsilon. 
\]
As LCS is a special case of MCS, this implies that if the condition of Theorem~\ref{thm:mcsprob} holds for a LCS, then we can recover the LCS with high probability with enough runs of the algorithm. Hand-waving arguments suggest that our algorithm favors longer MCS  as  
it will likely to contain more characters and more positions (from Algorithm~\ref{alg:segPos}) to be selected to $W$. 
In fact in the extreme case where a MCS contains is formed by multiple occurrences of a single distinct character, it will not be returned unless the character is selected at the first time. In Section~\ref{sec:simulation}, we shall study empirically the occurrence probability of a MCS and correlate that with its length.


\subsection{Complexity Analysis}

\begin{theorem}
For a set of $L$ strings
$\mathcal{A} = \{A_1,A_2,\ldots,A_L\}$, 
let $n_l$ be the string length of $A_l$, $l=1,\ldots,L$. 
Define $n_0 = \min(n_1,n_2,...,n_L)$ as the minimum string length, then the time complexity for one run of Algorithm $RandomMCS$ (Algorithm~\ref{alg:mcs})
to find a MCS solution for $\mathcal{A}$
is $O(n_0^2 \sum_{i=1}^{L} n_i)$. Therefore, when all strings are of equal length $n$, the time complexity is
$O(n^3L)$.
\end{theorem}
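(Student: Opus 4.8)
The plan is to bound the cost of a single run of $RandomMCS$ by first counting the number of recursive calls and then bounding the per-call cost, which is dominated by the repeated evaluation of $BreakPoints$ (Algorithm~\ref{alg:segPos}). The key observation is that each recursive invocation of $RandomMCS$ inserts exactly one character into the working subsequence $W$, so the recursion depth equals the final length $|W|$ of the returned MCS. Since $W$ is a common subsequence of every $A_l$, its length cannot exceed $n_0 = \min_l n_l$. Hence there are at most $n_0$ recursive calls, and the total cost is $n_0$ times the cost of a single call.

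Next I would bound the cost of one call, which is governed by the call to $BreakPoints(\mathcal{A}, W)$ in line~1. Inside $BreakPoints$ there is an outer loop over $k = 0, \ldots, |W|$, giving at most $n_0 + 1 = O(n_0)$ iterations. For each fixed $k$, the algorithm must compute $Middle(A_l, W, k)$ for every $l = 1, \ldots, L$ and then test whether the resulting substrings share a common character. Computing $Middle(A_l, W, k)$ requires locating the shortest prefix of $A_l$ containing $W(0,k]$ and the shortest suffix containing $W(k,|W|]$, which can be done by a single scan of $A_l$, costing $O(n_l)$; summing over $l$ gives $O(\sum_l n_l)$ per value of $k$. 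The common-character test $commonChar(\{m_1,\ldots,m_L\})$ can likewise be performed in $O(\sum_l n_l)$ time, for instance by marking which characters appear in each $m_l$ and intersecting. Thus a single evaluation of $BreakPoints$ costs $O(n_0 \sum_l n_l)$, and the remaining operations in a $RandomMCS$ call (the random selections in lines~3 and~5 and the string join in line~6) are dominated by this.

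Multiplying the per-call cost $O(n_0 \sum_l n_l)$ by the at most $n_0$ recursive calls yields the claimed total of $O\bigl(n_0^2 \sum_{i=1}^{L} n_i\bigr)$. Setting $n_l = n$ for all $l$ makes $n_0 = n$ and $\sum_l n_l = nL$, so the bound becomes $O(n^2 \cdot nL) = O(n^3 L)$, which is linear in $L$ as asserted.

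I expect the main obstacle to be justifying that each sub-operation genuinely runs within the stated per-character bounds, rather than the overall counting argument. In particular, one must argue carefully that $Middle(A_l, W, k)$, computing a shortest prefix and shortest suffix containing given subsequences, can be found by a linear scan of $A_l$ without incurring an extra factor of $|W|$; a greedy left-to-right match for the prefix and a symmetric right-to-left match for the suffix achieve this in $O(n_l)$ time. One should also confirm that the common-character intersection over the $L$ middle substrings is $O(\sum_l n_l)$ and not, say, quadratic in the alphabet or string lengths. Once these constant-work-per-character claims are in place, the product of the three factors (recursion depth $n_0$, breakpoint loop $n_0$, and work per breakpoint $\sum_l n_l$) gives the result directly.
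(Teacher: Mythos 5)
Your proposal is correct and follows essentially the same decomposition as the paper's own (much terser) proof: bound a single $BreakPoints$ evaluation by $O(n_0 \sum_{i=1}^L n_i)$, observe that the recursion depth is at most $n_0$ since $|W|$ grows by one character per call and cannot exceed the shortest string, and multiply. Your write-up simply supplies the per-character scanning details that the paper dismisses with ``it is easy to show.''
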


\begin{proof}
It is easy to show that the computational complexities of $BreakPoint$ (Algorithm \ref{alg:segPos}) and 
$commonChar$ (Definition \ref{def:commonchar}) are $O(n_0 \sum_{i=1}^L n_i)$ and $O(n_0L)$, respectively. The algorithm $RandomMCS$ may replicate $BreakPoint$ evaluations at most $n_0$ times. Hence the result.

\end{proof}
The above theorem states that the time complexity of our algorithm is linear in the number of strings $L$, as opposed to exponential in $L$ for algorithms to find LCS. It is therefore much more ameanable for the case of large number of strings.

\subsection{Comparison with Previous Approaches}

We compare our approach to previous approaches for finding MCSs.

\subsubsection{Extension of MCS Calculation}
All previous approaches for finding MCSs are developed for the case of two strings \cite{hirschberg1975linear,Fraser:1995:AAS:642129.642130,sakai2019maximal}. The recent algorithm in \cite{sakai2019maximal} can be extended to the case of multiple strings in the following manner. The original algorithm maintains a sequence of 
index pairs that tracks the matches between two strings. We extend their technique and maintain a sequence of $L$-tuple indices that tracks the matches between the $L$ strings. These $L$-tuple indices break the original strings into blocks, where additions to the sequence of $L$-tuples are searched within the matched blocks.
We present the pseudo code in the appendix. 




\subsubsection{Computational Complexity Comparisons}
For two strings with equal length $n$, 
\cite{sakai2019maximal} has the highest efficiency among all proposed algorithms for finding a MCS for two strings. The complexity is $O(n\text{log}(n))$. Our extension to the case of $L$ strings (see appendix) also enjoys the highest efficiency with a complexity $O(Ln\log n)$. However, 
since the  algorithm maintains a certain order when traversing the strings,
it can only find  one MCSs (or two MCSs if we reverse the order of strings), which may not be desirable when there are multiple MCSs.
\cite{Fraser:1995:AAS:642129.642130} focuses on finding MCS first, and obtain all MCSs and the LCS for two strings with length $m\ \text{and}\ n$ with a complexity $O(mn(m+n))$. \cite{hirschberg1975linear} developed
an algorithm for the constrained LCS for two strings with lengths $m$ and $n$ and a complexity $O(mn)$. \cite{hunt1977fast} provides an algorithm to  compute the LCS for 2 strings in the complexity $O(n\log(n))$, but it is only for the special best case scenario with a short LCS.
The following table summarizes the computational complexities of these different methods.

\begin{table}[ht]
\label{tab:mcs.complexity}
\begin{center}
\begin{tabular}{c|c|c}
 Algorithm & Target & Complexity  \\\hline 
 $RandomMCS$ & MCS, $L$ strings & $Ln^3$  \\
 Our extension to Sakai (2019) 
 & MCS, L strings & $O(Ln\log n)$ \\
 Sakai (2019) \cite{sakai2019maximal}
 & MCS, 2 strings & $O(n\log n)$ \\
 Fraser \& Irving(1995) \cite{Fraser:1995:AAS:642129.642130} & MCSs, 2 strings  & $mn(m+n)$\\
 Hirschberg(1975) \cite{hirschberg1975linear} & CLCS, 2 strings  & $mn$\\
 Hunt \& Szymanski(1977)\cite{hunt1977fast} & LCS, 2 strings & $\geq O(n\log(n))$ \\\hline
\end{tabular}
\caption{Comparison of Computational Complexity (CLCS stands for constrained LCS)}
\end{center}
\end{table}
\section{Simulation Study}
\label{sec:simulation}
In this section, we perform simulation studies to understand the performance of our $RandomMCS$ algorithm. 
First, we would like to  understand empirically if the longest MCS from multiple runs of $RandomMCS$ 
would yield a solution to LCS.
Second, we study empirically the computational complexity of our algorithm. 

\subsection{Less than 5 Strings}
In this setting, our simulations are run with the number strings varies from 2 to 4, with string lengths ranging from $20$ to $50$. We also vary the alphabet size from 5 to 100. For this experiment, we use the basic dynamic programming method to compute LCS, and run our $RandomMCS$ algorithm 1000 times to select the longest one and compare the result with the real LCS.  The reason that we stop at 4 strings is due to the explosion of the computational time used for finding LCS using dynamic programming when the number of strings exceeds 5.

To simplify the evaluation, strings are generated using random characters from the alphabet. We also consider two kinds of randomization when implementing $RandomMCS$. For the first kind, when we randomly insert a character into a common sequence (line 5 of Algorithm~\ref{alg:mcs}), 
we uniformly choose the character from the common set. For the second kind, we use frequency weighting to select the character 
with a weight that is proportional to the (least) number of times the character appears in each string.  

Some sample results are described as follows. For $L=4$ random strings each with length $n=50$ from an alphabet of size $B=6$, both LCS algorithm and the longest MCS solution from 1000 iterations of $RandomMCS$ yield the same string with length 15. The longest MCS solution took 3sec, and the LCS solution takes 8sec. 
For $L=4, n=50$ and alphabet size $B=50$, longest MCS from our algorithm also yields the same result as the real LCS. In fact, we have not encountered a case where they disagree. Furthermore, the 1000 repetitions are unnecessary for finding LCS using our algorithm as the real LCS tends to have a high occurrence probability being returned (close to 40\%) in many instances. Finally, we do not find significant differences in the performance between the two types of random selection. 


\subsection{A Large Number of Strings}
When the number of strings $L$ gets large, existing algorithms for finding LCS fails to  work well due to the high computational time. We use the following approach to evaluate our algorithm in this instance. Our simulation is designed in such a way that finding the longest common subsequence is challenging. 

Our simulation generates $L=1000$ strings of length 60 in the following manner. First, we generate 4 common subsequences that are contained in each of the 1000 strings: $S_1, S_2, S_3, S_4$ with increasing lengths 3, 6, 9, and 12 respectively, from an alphabet size of 15. 
Next, we insert these subseqeunces into a string of 60 characters in the following way. First we randomly pick 3 indices to situate $S_1$, then we randomly pick 6 indices to situate $S_2$ from the remaining 57 indices, then we randomly pick 9 indices to situate $S_3$ from the remaining 51 indices, and finally we randomly pick 12 indices to situate $S_4$ $S_4$ from the remaining 42 indices. This way all the subseqeunces $S_1,S_2,S_3,S_4$ will be intermingled in each string
which makes the problem of finding LCS challenging.
Notice that the total number of characters in  $S_1,S_2,S_3,S_4$ is 30. 
In the last step of the string generation, we insert 30 random characters into the remaining 30 slots, with an expanded alphabet size of 30 (which includes the original alphabet set of size 15 for $S_1,S_2,S_3,S_4$).

For two random strings with a common length $n$ where
characters are randomly generated from an alphabet,
let the expected length of their LCS be $e$. It has been shown that $\lim_{n\rightarrow \infty} e/n <1$ \cite{chvatal1975longest, kiwi2005expected}. 
Therefore it is easy to conclude that
the expected length of LCS of $L$ such random strings will decrease to 0 exponentially fast with $L$. Since in the last step where we generated 30 completely random characters, with a large $L$, we expect the common subsequence from these 30 random characters will be negligble (or empty). 
Therefore, by design, we expect 
the long subsequences in $S_1,S_2,S_3,S_4$  will remain as MCS
and $S_4$ will be LCS since it is the longest.

The result of our simulation is as follows. With 200 runs of $RandomMCS$, the empirical estimate of the occurrence probabilities for each $S_i,i=1,\dots,4$, is: 0.27 for $S_4$, 0.23 for $S_3$, 0.11 for $S_2$ and a zero probability value for $S_1$. The reason that 
 $S_1$ is no longer a MCS is due to the intermingling of $S_1,S_2,S_3,S_4$ among themselves during the process of situating $S_1,S_2,S_3,S_4$, as the mixing 
 creates spurious common subsequences and $S_1$ is short enough to be absorbed by other MCS solutions. In fact, it is absorbed in one of returned MCS solutions with length 4 (so an extra character was included) and a probability value of 2\%. The intermingling also creates other MCS solutions which accounts for the remaining 38\% of the returned MCS solutions with lengths ranging from 4 to 11. We also varied the alphabet size in the experiment, and found that 
 the intermingling will decrease with larger alphabet size and therefore it would be easier to locate LCS.

To understand the impact of
frequency weighting in the
random character selection (line 5 of Algorithm~\ref{alg:mcs}) 
and the number of characters in the long common sequence 
on the performance of $RandomMCS$ algorithm, we perform the following 2 by 2 experiments. We have two settings for the weights: uniform or frequency based; 
and two configuration for $S_4$ (the longest common subsequence with length 12): a single alphabet and the original 8 distinct alphabets generated by random. The following table shows the occurrence probabilities of $S_4$ in the returned 200 MCS solutions.
\begin{table}[ht]
    \centering
    \begin{tabular}{|c|c|c|}
    \hline
         & uniform weights & frequency-based weights\\\hline
      single alphabet $S_4$   & 0\% & 5\% \\\hline
      8-alphabet $S_4$ & 28\% & 27\% \\\hline 
    \end{tabular}
    \caption{Occurance probabilities for $S_4$, the longest common subsequence by design}
    \label{tab:my_label}
\end{table}
It is clear when $S_4$ is made of all identical characters (i.e, alphabet of size 1), there is a significant drop in the probability of locating the LCS. Nonetheless, random character selection using frequency-weighting performs a lot better. The uniform weights
fails to discover $S_4$, and the longest returned MCS has a length 9. This is because in the case of uniform weights,
the unique alphabets in the long LCS is one of the many to be selected at random with no frequency weighting
and this character is shared by many other MCSs.

We also observe that time to run $RandomMCS$ 200 times is about 150sec for $L=1000$ in our experiment, which is about 50 times for $L=4$ and $1000$ runs (recall the latter instance took about 3sec). This is in agreement with our theoretical analysis of $RandomMCS$ which shows a time complexity linear in $L$.

Our empirical results indicate that LCS typically has a non-negligible occurrence probability among all solutions of MCS and thus will very likely be found by running $RandomLCS$ repeatedly.
However, the performance depends on the nature of LCS and how random search is carried out in the algorithm.

 



\section{Applications to Auto Machine Learning}
In this section, we illustrate how methods we  developed for finding MCSs can be applied to string pre-processing.
Developing automated methods for data pre-processing is an important topic in automated machine learning, or {\em AutoML}, where the objective is to automate the end-to-end process of applying machine learning to real-world problems \cite{feurer2015efficient}. 
We demonstrate how our method can be used to develop a good understanding of string columns in tabular data, and extract important features for downstream machine learning tasks.


\subsection{Data Understanding}
Tabular data is a common form of data representation. It is organized by rows and columns where rows represent individual records and columns are the associated attributes. For large data tables with many rows and columns, it is difficult to obtain a good understanding of the data content without laborious  
manual examination. For columns with string values, we can apply our methods to understand the patterns that
are common across all column values and extract important information or features for downstream machine learning. 

The dataset we use for demonstration contains broadband home router data records of customers from a  network carrier during a 30-day period.  It consists of $27$ columns and $238330$ rows, where columns are device ID and type, associated network node and type, the customer information, and time series of several KPIs. Among the 27 columns, there are 8 columns are either strings or DateTime. For each of these columns, we apply our algorithm to uncover the longest common subsequence  
from 100 runs of $RandomMCS$ algorithm. The resulting patterns are shown
in Table \ref{tab:bt}, where 
we post-processed these common subsequences and represented them in the form of regular expressions where $*$ (the asteroid sign) indicates any number of characters. As a result, 
the contents in the string columns become much more apparent with this information.

\begin{table}[ht]
    \centering
    \begin{tabular}{l | l }
       Colname & Pattern \\\hline\hline
       network.type  & 2*CN* \\\hline  software.version & * \\\hline
       day & 2015-12-*  \\\hline
       customer.attr1 & * \\\hline pop.location & POP-* \\\hline
       linecard.id  & 2*CN*--*--* \\\hline
       sid & BB* \\\hline
        device.id & *0*-Home Hub *0 Type  *-+*+* \\\hline
    \end{tabular}
    \caption{String pattern discovered using our algorithm}
    \label{tab:bt}
\end{table}


\subsection{Feature Extraction}
We can often use the extracted column string patterns in tabular data to engineer new features. 

It is clear that from Table \ref{tab:bt} that some columns have a clear pattern while others do not. For example, both {\em software.version} and {\em customer.attr1} do not have a common pattern. On the other hand, the column of {\em device.id} shows a clear pattern where it can be represented by the string join of 6 sub-fields, each is a combination
of some common characteris shared across the values and a varying substring indicated by asteroid ($*$).
These subfields can be extracted to represent possibly more informative features for characterizing the device.id. This feature extraction step can be automated once patterns are found and the extracted features can be used for downstream machine learning. In fact, our methods can also be applied to auto-detect field separators from an ASCII file and then extract the columns.

\section{Conclusion and Future Work}
 In this paper, we develop a randomized algorithm, referred to as {\em Random-MCS} for finding the maximal common subsequence ($MCS$) of multiple strings. 
We show the complexity of our algorithm is linear in the number of strings $L$.
Furthermore, we demonstrate via both theoretical and experimental studies that the longest subsequence from multiple runs of {\em Random-MCS} often yields a solution to $LCS$. As for future work, we want to improve the  probability bound for a single MCS solution and extend our algorithm to the case when the set of strings is polluted with dirty data. 
\par

\appendix
\section{Extension of Algorithm 1 in Sakai (2019) \cite{sakai2019maximal} to the case of $L$ strings}
The $\mathcal{I}^\prec(A,c,i)$ denotes the least index such that $c$ does not appear in $A(\mathcal{I}^\prec(A,c,i),i]$ and $\mathcal{I}^\succ(A,c,i)$ denotes the greatest index such that $c$ does not appear in $A(i,\mathcal{I}^\succ(A,c,i)]$. The idea is to cut the strings into segments backward and determine the $MCS$ forward. The index sets $idxP$ and $idxR$ mean the previous indices and rear indices. For example, the indices $idxP[j]$ and $idxR[j]$ determine a segment of the $j$ string. So $idxP$ and $idxR$ cut a segment from every string. The Algorithm 4 $Common$ will return $-1$ if no common character exists in all the $L$ segments and return $c$ and $j$ if the common char $c$ appears in the $j$ string first.
\begin{algorithm}
\caption{OneMCS returns a single solution for MCS}
\label{alg:onemcs}
\begin{flushleft}
\textbf{Input}: A List of String $List$ \leavevmode \\
\textbf{Output}: {Single MCS $W$}
\end{flushleft}
\begin{algorithmic}[1]
\STATE Initialize $ W = [\wedge,\text{\textdollar}], W_p \text{\ and\ } W_r$ are vectors with length $L$, $k = 0$.
\STATE \hspace{0in} \textbf{for} $i$ in $1:n$
	\STATE \hspace{0.15in} $\hat{W}[i] = \{0,|A_i|-2\}$ \leavevmode \\
\STATE \hspace{0in} \textbf{while} {k < |$W$|-1}
\STATE \hspace{0.15in}{\textbf{for} $i$ in $1:n$}
\STATE \hspace{0.3in} {$W_p[i] = \hat{W}[i][k], W_r[i] = \hat{W}[i][k+1]$}
\STATE \hspace{0.15in} \textbf{while} {$common(\mathcal{A},W_p,W_r) == -1$}
\STATE \hspace{0.3in}  \textbf{for} $j$ in 1:|$W_r$| \leavevmode \\
\STATE \hspace{0.45in} {$\hat{W}[j][k+1] = W_r[j]-1, W_r[j] = \hat{W}[j][k+1]$}
\STATE \hspace{0.15in} {first = 0} \leavevmode \\
\STATE \hspace{0.15in} first = 1 if $\exists j \ \mbox{such that } W_r[j] == W_p[j]$
\STATE \hspace{0.15in} \textbf{if} first == 1
\STATE \hspace{0.3in} \textbf{for} $j$ in $1:n$
\STATE \hspace{0.45in}$\hat{W}[j][k+1]=\mathcal{I}^\succ(\hat{W}[j],W[k+1],W_p[j]+1)$
\STATE \hspace{0.3in} $k = k+1$
\STATE \hspace{0.15in} \textbf{else}
\STATE \hspace{0.3in} $idx,c = common(\mathcal{A},W_p,W_r)$
\STATE \hspace{0.3in} $W$=$W[1,k]\oplus c \oplus W[k+1,|W|]$
\STATE \hspace{0.3in} \textbf{for} $j$ in $1:n$
\STATE \hspace{0.45in} \textbf{if} $j == idx$
\STATE \hspace{0.6in} {$\hat{W}[j]=\hat{W}[j][:k]\oplus W_r[j]-1\oplus \hat{W}[j][k+1,:]$}
\STATE \hspace{0.45in} \textbf{else} 
\STATE \hspace{0.6in} {$\hat{W}[j] = \hat{W}[j][:k] \oplus \mathcal{I}^\prec   (\hat{W}[j],c,W_r[j])-1) \oplus \hat{W}[j][k+1,:]$}
\STATE \hspace{0in}\textbf{return} $W$
\end{algorithmic}
\end{algorithm}

The Algorithm 3 $OneMCS$ finds a specific $MCS$ for $L$ strings in the complexity $O(nL\log (n))$. Inspired by \cite{sakai2019maximal}, we extend the algorithm from 2 strings to $L$ strings. If it is hard to understand the algorithm $OneMCS$, please read \cite{sakai2019maximal} first.

\begin{algorithm}
\caption{Common returns the common character and its index}
\label{alg:findcommon}
\begin{flushleft}
\textbf{Input}: A List of String $\mathcal{A}=\{A_1,...,A_L\}$, Previous Index $idxP$, Rear Index $idxR$ \leavevmode \\
\textbf{Output}: {The list of indices}
\end{flushleft}
\begin{algorithmic}[1]
\STATE \hspace{0in} \textbf{for} $j$ in 1:$L$:
\STATE \hspace{0.15in}\textbf{if} {$idxP[j] >= idxR[j]$} \leavevmode
\STATE \hspace{0.3in} \textbf{return} {$idxP[j],idxR[j]$}
\STATE \hspace{0in} \textbf{for} {$j$ in 1:$L$}
\STATE \hspace{0.15in} $c = A_j[idxR[j]]$
\STATE \hspace{0.15in} \textbf{for} {$i$ in 1:$L$}
\STATE \hspace{0.3in}  \textbf{if} {$i == j$}
\STATE \hspace{0.45in} \textbf{continue}
\STATE \hspace{0.3in}\textbf{if} $\mathcal{I}^\prec (A_i,c,idxR[i])<=idxP[i]$
 \STATE \hspace{0.45in} \textbf{break}
\STATE \hspace{0.3in}\textbf{if} $i == L-1$
\STATE \hspace{0.45in} \textbf{return} $\{j,c\}$
\STATE \hspace{0.3in} \textbf{if} $j == L-1 \text{ and } i == L-2$
\STATE \hspace{0.45in} \textbf{return} $\{j,c\}$
\STATE \hspace{0in}  \textbf{return} -1
\end{algorithmic}
\end{algorithm}

\bibliographystyle{ACM-Reference-Format}
\bibliography{mcs}
\end{document}